\newtheorem{theorem}	{Theorem}
\newtheorem{lemma}	{Lemma}
\newtheorem{corollary}	{Corollary}
\begin{document}

\title{Quantitative Characterization of Randomly Roving Agents}
\author{\IEEEauthorblockN{Hakob Aslanyan}
\IEEEauthorblockA{Computer Science Department\\University of Geneva\\
1227 Geneva, Switzerland\\
hakob.aslanyan@unige.ch}
\and
\IEEEauthorblockN{Jose Rolim}
\IEEEauthorblockA{Computer Science Department\\University of Geneva\\
1227 Geneva, Switzerland\\
jose.rolim@unige.ch}}

\maketitle

\begin{abstract}
Quantitative characterization of randomly roving agents in Agent Based Intrusion Detection Environment (ABIDE) is studied. Formula simplifications regarding known results and publications are given. Extended Agent Based Intrusion Detection Environment (EABIDE) is introduced and quantitative characterization of roving agents in EABIDE is studies.
\end{abstract}

\IEEEpeerreviewmaketitle

\section{Introduction}
Wireless sensor networks (WSN) are composed of thousands of nodes that are spatially distributed in an unattended area usually without prior knowledge of the network topology. They act as a real time environmental monitoring tool by sensing and reporting environmental data to the base station, which usually happens in a multi-hop way. In many WSN applications, like hostile area monitoring or when WSN acts as an intrusion detection system for a building, the security of the network is crucial. Especially when network nodes are deployed in an unattended area an adversary can have a physical access to them which will allow him to read, modify or erase the content of a node. In some deployments node replication attack also becomes feasible. The aim of an intrusion detection system (IDS) for those networks is to act as the second defence line against network attacks that preventive mechanisms fail to address \cite{silva05}. An Intrusion detection system for a network is a system that dynamically monitors the events taking place on a network and decides whether these events are symptoms of an attack or constitute a legitimate use of the system \cite{debar99}. Comprehensive surveys on IDS for WSN are presented in \cite{Akyildiz2002,rassam2012}.

Agent based intrusion detection systems became popular because of their scalability, reconfigurability and survivability \cite{Brahmi2011,ingram2000,spafford2000,krugelmobileagent, brandesnetanalysis}. It is more difficult for an attacker to deal with such IDS as they do not have defined structures and are not predictable. In this work we discuss an agent based intrusion detection system called ABIDE (Agent Based Intrusion Detection Environment) \cite{reed12000,reed22000,irarandomagents} which uses autonomous software agents for intrusion detection in computer networks.  In ABIDE autonomous agents are moving randomly in a network along communication links and recording/calculating a unique information on randomly selected nodes. An example of such unique information can be a checksum of the operating system running on a node, which can help to understand whether it has been modified or not. Later each agent passes the data it collected to a special agent which combines the data received from various agents and tries to determine weather an intrusion took place or not (more details on ABIDE are given in Section $\ref{abide}$). \cite{irarandomagents} tries to calculate the number of agents required by ABIDE for detecting intrusions in a given size network with a given probability. The formulas, that give the relation between the number of agents and the probability of an intrusion to be detected, presented in \cite{irarandomagents}, such as Formula $(\ref{pbigformula})$, are complex and unobservable and their simplifications or approximations  are of interest. By this same reason \cite{irarandomagents} considers a computer simulation instead of using the Formula $(\ref{pbigformula})$, to understand the typical number of agents necessary to retrieve the required information in a network. Our work tends to prove simple formulas analytically, for the same numerical characteristics of ABIDE, which can be used to understand the relations between the number of agents and the amount of information that can be gathered by them, without considering a software simulations. We also propose the extended version of ABIDE, called EABIDE and consider the same quantitative characteristics for it. As a result we get formulas representing the relation between the number of roving agents in EABIDE and the amount of information that can be gathered by them in terms of Stirling numbers of the second kind. Known asymptotic estimates for Stirling numbers of second kind can further be applied to get more compact approximations \cite{chelluri,louchard2012,temme1993}.

\section{Agent Based Intrusion Detection Environment (ABIDE)}\label{abide}
Consider a network where each node has a software agent hosting environment (i.e. software agents can move into a node perform some action and leave.). ABIDE \cite{irarandomagents} uses four different kinds of agents to organize intrusion detection and correction in the system.
\begin{enumerate}
\item A Data Mining Agent (DMA) roams around in a network (i.e. randomly chooses a host node and moves there) and acquires environmental information from nodes. DMA is lightweight and uses simplest mining algorithms. For example DMA may calculate a checksum of the operating system that runs on a host node, and if it decides that the value of the checksum is suspicious it can keep the value and curry on for further analysis.
\item A Data Fusion Agent (DFA) roams around or is located on the base station. It receives the data collected by various DMAs and builds a larger picture of events from
this data. As the DFA has a combined data it can apply classical intrusion detection techniques to determine whether an intrusion took place or not. Of course the power of the DFA depends on the quantity of information received from DMAs.
\item Nodes that have been identified as suspicious by DFA are further visited by a Probe Agent (PA), sent by DFA, which performs a test on a host node to confirm the intrusion.
\item Once the intrusion is confirmed by a PA a Corrective Agent (CA) can be dispatched by a DFA to take actions.
\end{enumerate}

We tend to answer to the following question. What is the probability of identifying intrusions in a network of a given size with the set of given DMAs in a presence of a single DFA, where DFA needs information from at least $t$ distinct nodes \cite{irarandomagents} in order to be able to determine whether there is an intrusion or not. Further this can be used to calculate the number of DMAs required for identifying intrusions in a given network with a given probability.

Formally the problem we consider is the following. Given a set of $k$ DMAs which roam around in a network of $n$ nodes. Each DMA has a storage where it can keep a data from $m$ different nodes. DMA returns to DFA as soon as it acquires a data from \textbf{exactly} $m$ randomly chosen distinct nodes. Note that when a DMA moves into a node it is not obliged to take actions there, the node can be used as intermediate hop for roaming, this way randomness of the visited nodes (nodes where a data has been collected) can be guaranteed. It is required to calculate the probability $P_{k}(n,m,t)$ of DFA having data from \textbf{exactly} $t$ distinct nodes. Note that each DMA gathers a data from $m$ distinct nodes but the data gathered by two different DMA may intersect. \cite{irarandomagents} provides the following formula

\begin{align}
 P_{k}  &(n,m,t)= \nonumber \\ &\binom{n}{m}^{-(k-1)}\sum_{m_{2},m_{3},\ldots,m_{k-1}=0}^{m}\binom{m}{m_{2}} \binom{n-m}{m-m_{2}} \cdot \nonumber \\
 \cdot  &\binom{2m-m_{2}}{m_{3}}\binom{n-2m+m_{2}}{m-m_{3}}\ldots \nonumber \\
 \ldots &\binom{(k-2)m-m_{2}-\ldots-m_{k-2}}{m_{k-1}} \cdot \nonumber \\
 \cdot  &\binom{n-(k-2)m+m_{2}+\ldots+m_{k-2}}{m-m_{k-1}}\cdot \nonumber \\ 
 \cdot  &\binom{(k-1)m-m_{2}-\ldots-m_{k-1}}{km-t-m_{2}-\ldots-m_{k-1}}\cdot \nonumber \\
 \cdot  &\binom{n-(k-1)m+m_{2}+\ldots+m_{k-1}}{t-(k-1)m+m_{2}+\ldots+m_{k-1}},k\geq 4. \label{pbigformula}
\end{align}

Of course $(\ref{pbigformula})$ is unobservable and simplifications or approximations are of interest. By this same reason $(\ref{pbigformula})$ considers computer simulations for approximating the value of $P_{k}(n,m,t)$. Below we present formula simplifications that allow to compute the exact value of $P_{k}(n,m,t)$ without software simulations.

\section{Coverage Characterization of Roving Agents in ABIDE}
Consider a set $N=\lbrace v_{1},...,v_{n} \rbrace$ of $n$ nodes and subsets $S_{i}\subset N, i=1,...,k$, where subset $S_{i}$ corresponds to the set of nodes visited by agent \footnote{later in paper by saying agent we mean DMA} $i$ and is of size $m$ (here we say a node is visited by agent $i$ if $i$ collected a date from that node, i.e. nodes that were used as intermediate hops for roaming are not considered as visited). We consider a probability distribution scheme over $N$. As the nodes visited by agents are random the subsets $S_{i}, i=1,...k$ will be independent and equiprobable. Having in total $C_{n}^{m}$ subsets of size $m$ the probability for one of them to acquire is $1/C_{n}^{m}$. We are interested in probabilistic characteristics of union $\cup_{i=1}^{k}S_{i}$ and its size. In particular, what is the probability that the union of those subsets contains exactly $t$ elements.
\begin{align}
P_{k}(n,m,t)=Pr\left(\left|\bigcup_{i=1}^{k}S_{i}\right|=t\right). \label{pt}
\end{align}

Consider a matrix $A^{k \times n}=\{a_{ij}\}$ (Figure $\ref{fig:matrixA}$) where
\begin{align}
 a_{ij} = 
  \begin{cases} 
   1 & \text{if } v_{j}\in S_{i} \\
   0 & \text{otherwise}
  \end{cases}.
\end{align}

\begin{figure}[tb]
\begin{center}
\includegraphics[width=.3\textwidth]{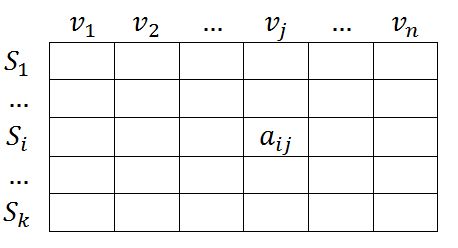}
\end{center}
\caption{Matrix representation of visited nodes.}
\label{fig:matrixA}
\end{figure}

From $|S_{i}|=m$ it follows that each row of matrix $A$ will be composed of exactly $m$ $1$s and $n-m$ $0$s. A column $j$ of matrix $A$ represents the node $v_{j}$ and it composed of zeros alone, if and only if non of the $k$ agents visited the node $v_{j}$, i.e. non of the subsets $S_{i}$ contains $v_{j}$. Therefore the union $\cup_{i=1}^{k}S_{i}$ will be composed of exactly $t$ distinct elements if and only if $A$ contains exactly $n-t$ columns composed of $0$s alone and all the other columns contain at least one $1$. It is obvious that the number of possibilities to get information from exactly $t$ out of $n$ nodes, with $k$ agents equipped with a memory of size $m$ is given by the number of $A$ matrices discussed above. Denote the number of ${k \times t}$ sub-matrices $Q$ (Figure $\ref{fig:matrixQ}$) that have exactly $m$ $1$ on each row and have at least one $1$ on each column by $Q(k,m,t)$. Then the number of ${k \times n}$ matrices with exactly $m$ ones on each row and with exactly $n-t$ columns with no $1$s will be
\begin{align}
C_{n}^{t} \cdot Q(k,m,t) \label{numknmatrices}
\end{align}
where $C_{n}^{t}$ stands for the number of possibilities to pick $t$ out of $n$ nodes (columns) and $Q(k,m,t)$ stands for the number of possibilities to cover all the $t$ nodes by $k$ agents equipped with a memory of size $m$.

\begin{figure}[tb]
\begin{center}
\includegraphics[width=.3\textwidth]{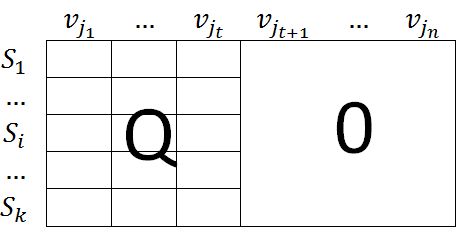}
\end{center}
\caption{Sub-matrix $Q$.}
\label{fig:matrixQ}
\end{figure}

$Q(k,m,t)$ can be calculated by inclusion-exclusion principle. First, over $k \times t$ matrices we take all the matrices with exactly $m$ $1$s on each row, then we remove all the matrices that have at least one column initially filled in with $0$s (such matrices do not obey the conditions we require), then we add matrices with at least $2$ columns filled in with $0$s and so on.
The formula representation of related quantities is
\begin{align}
Q(k,m,t) & = \nonumber \\ 
& \left(C_{t}^{m}\right)^{k}-C_{t}^{1} \cdot \left(C_{t-1}^{m}\right)^{k}+C_{t}^{2} \cdot \left(C_{t-2}^{m}\right)^{k}-\ldots \nonumber \\ &+\left(-1\right)^{t-m}C_{t}^{t-m} \cdot \left(C_{m}^{m}\right)^{k}= \nonumber \\ 
& \sum_{i=0}^{t-m}\left(-1\right)^{i}C_{t}^{i} \cdot \left(C_{t-i}^{m}\right)^{k} \label{inclexcl}
\end{align}
We have proven
\begin{theorem}\label{theoremPknmt}
\begin{align}
	P_{k}&(n,m,t)= \frac{C_{n}^{t} \cdot Q(k,m,t)}{\left(C_{n}^{m}\right)^{k}}= \nonumber \\
	= &\frac{C_{n}^{t} \cdot \sum_{i=0}^{t-m}\left(-1\right)^{i}C_{t}^{i} \cdot \left(C_{t-i}^{m}\right)^{k}}{\left(C_{n}^{m}\right)^{k}}. \label{pformula}
\end{align}
\end{theorem}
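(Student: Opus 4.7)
The plan is to compute $P_k(n,m,t)$ as a ratio of favorable configurations to total configurations, using the bijection between agent behaviors and the $k \times n$ binary matrix $A$ introduced above. Since each agent independently and uniformly selects one of the $C_n^m$ size-$m$ subsets of $N$, the sample space is the set of all $k \times n$ matrices with exactly $m$ ones per row, of which there are $(C_n^m)^k$; this is the denominator. The event $|\bigcup_i S_i| = t$ corresponds exactly to those matrices having precisely $n-t$ all-zero columns and at least one $1$ in each of the remaining $t$ columns, as observed in the paragraph preceding the theorem.

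Next I would factor the count of favorable matrices as a product. First choose which $t$ of the $n$ columns will be the covered ones: this contributes $C_n^t$. Then, restricted to those $t$ columns, the admissible configurations are exactly the $k \times t$ binary matrices with $m$ ones per row and no all-zero column, whose count is $Q(k,m,t)$ by definition. Multiplying gives (\ref{numknmatrices}), and dividing by $(C_n^m)^k$ establishes the first equality in (\ref{pformula}).

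It remains to derive the closed form for $Q(k,m,t)$ by inclusion--exclusion on the set of ``bad'' columns. Let $\mathcal{B}_j$ be the family of $k \times t$ matrices having $m$ ones in each row and an all-zero column in position $j$. For any fixed subset $I \subseteq \{1,\dots,t\}$ of size $i$, the matrices that are zero on all columns in $I$ are in bijection with $k \times (t-i)$ matrices with $m$ ones per row, provided $t-i \ge m$; there are $(C_{t-i}^m)^k$ of them. Summing over $i$ with the standard inclusion--exclusion signs, and using that $C_t^i$ counts the choices of $I$, yields (\ref{inclexcl}); the upper limit $t-m$ appears because $C_{t-i}^m = 0$ whenever $t-i < m$. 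Substituting into the first equality gives (\ref{pformula}).

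The only nontrivial point is verifying that the independence/uniformity of the $S_i$ really does make every admissible matrix equiprobable with weight $(C_n^m)^{-k}$, and that the decomposition ``choose the $t$ covered columns, then fill them admissibly'' does not double-count: both follow because the choice of covered column set is uniquely determined by the matrix. Everything else is bookkeeping, so I do not expect a serious obstacle.
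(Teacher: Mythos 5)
Your proposal is correct and follows essentially the same route as the paper: the paper's proof likewise reduces the statement to the count $C_n^t\cdot Q(k,m,t)$ of admissible matrices from $(\ref{numknmatrices})$, the inclusion--exclusion evaluation of $Q(k,m,t)$ in $(\ref{inclexcl})$, and the observation that the sample space has size $\left(C_n^m\right)^k$. You simply spell out the uniformity and no-double-counting checks that the paper leaves implicit.
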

\begin{proof}
The proof follows from $(\ref{numknmatrices})$, $(\ref{inclexcl})$ and the fact that the number of $k \times n$ matrices with exactly $m$ $1$s on each row is $(C_{n}^{m})^k$.
\end{proof}

First of all here we receive a real simplification of $(\ref{pbigformula})$. The formula received is still complex, but it might be easily calculated and the applied Markov inequality may give asymptotic estimates of $t$-subset probabilities \cite{medvedev}.

Another important characteristic, the mean value of subset size $t$, might be computed as:
\begin{align}
	&\sum_{t=m}^{\min(km,n)}t \cdot P_{k}(n,m,t)= \nonumber \\
	&=\sum_{t=m}^{\min(km,n)}\frac{t \cdot C_{n}^{t} \cdot \sum_{i=0}^{t-m}\left(-1\right)^{i}C_{t}^{i} \cdot \left(C_{t-i}^{m}\right)^{k}}{\left(C_{n}^{m}\right)^{k}}. \label{meanformula}
\end{align}

\section{Extended Agent Based Intrusion Detection Environment (EABIDE)}
We generalize the intrusion detection system proposed in \cite{irarandomagents} by allowing data mining agents (DMA) to collect a redundant data, i.e. in contrast with the original version of ABIDE, where each DMA collects data from $m$ randomly chosen \textbf{distinct} nodes, here DMA is allowed to have more than one instance of the same data in his memory (i.e. on each visit of the same node data might be calculated and stored). DMA do not store several copies of the same data in purpose, this can be unavoidable in networks where network nodes are indistinguishable from DMA point of view. The later might be required by the security system of the network (e.g. if nodes use randomized and encrypted IDs DMA can not recognize the node visited before as it will have different ID, so the data collected from the same node during two different visits will be indistinguishable). As a result when the memory of a DMA is full it will contain data from $1\leq l \leq m$ distinct nodes in contrast with $m$ in case of ABIDE. A data fusion agent (DFA), having access to security schemes deployed in the networks, can sort out the data received from a DMA, discard redundant data and keep the $l$ pieces of distinct data.

In Extended Agent Based Intrusion Detection Environment (EABIDE) we are interested in the same question as before.

\begin{quote}
\textit{What is the probability of identifying intrusions in a network of a given size with the set of given DMAs in a presence of a single DFA, where DFA needs information from at least $t$ distinct nodes in order to be able to determine whether there is an intrusion or not.  Further this can be used to calculate the number of DMAs required for identifying intrusions in a given network with a given probability.}
\end{quote}

Formally the problem we consider is the following. Given a set of $k$ DMAs which roam around in a network of $n$ nodes. Each DMA has a storage where it can keep $m$ pieces of data. DMA returns to DFA as soon as it acquires $m$ pieces of data, from randomly chosen nodes (from DMA point of view all the $m$ pieces of data will be different). Note that when a DMA moves into a node it is not obliged to take actions there, the node can be used as intermediate hop for roaming, this way randomness of the visited nodes (nodes where a data has been collected) can be guaranteed. It is required to calculate the probability $P_{k}^{*}(n,m,t)$ of DFA having data from \textbf{exactly} $t$ distinct nodes. The difference with the ABIDE is that not only the data gathered by different DMA may intersect but also the data in the memory of a single DMA may be redundant.

\section{Coverage Characterization of Roving Agents in EABIDE}
Consider a set $N=\lbrace v_{1},...,v_{n} \rbrace$ of $n$ nodes and subsets $S_{i}^{*}\subset N, i=1,...,k$, where subset $S_{i}^{*}$ corresponds to the set of distinct nodes visited by agent $i$ (after removing repeating nodes, i.e. a set of nodes by DFA point of view) and $1\leq |S_{i}^{*}| \leq m$ (here we say a node is visited by agent $i$ if $i$ collected a date from that node, i.e. nodes that were used as intermediate hops for roaming are not considered as visited). We consider a probability distribution scheme over $N$. We are interested in probabilistic characteristics of union $\cup_{i=1}^{k}S_{i}^{*}$ and its size. In particular, what is the probability that the union of those subsets contains exactly $t$ elements.
\begin{align}
P_{k}^{*}(n,m,t)=Pr\left(\left|\bigcup_{i=1}^{k}S_{i}^{*}\right|=t\right). \label{ptextended}
\end{align}

This time the matrix $B^{k \times n}=\lbrace b_{ij}\rbrace$ corresponding to subsets $S_{i}^{*}$ will be

\begin{align}
 b_{ij} = 
  \begin{cases} 
   1 & \text{if } v_{j}\in S_{i} \\
   0 & \text{otherwise}
  \end{cases}.
\end{align}

From $1\leq |S_{i}^{*}| \leq m$ it follows that on each row of matrix $B$ there is at least $1$ and at most $m$ $1$s and the rest is filled by zeros. A column $j$ of matrix $B$ represents the node $v_{j}$ and it composed of zeros alone, if and only if non of the $k$ agents visited the node $v_{j}$, i.e. non of the subsets $S_{i}^{*}$ contains $v_{j}$. Therefore the union $\cup_{i=1}^{k}S_{i}^{*}$ will be composed of exactly $t$ distinct elements if and only if $B$ contains exactly $n-t$ columns composed by $0$s alone and all the other columns contain at least one $1$. It is obvious that the number of possibilities to get information from exactly $t$ nodes, of network of $n$ nodes, with $k$ agents that fetch $1 \leq l_{i} \leq m$ unique data each is given by the number of $B$ matrices discussed above. Denote the number of ${k \times t}$ sub-matrices $R$, that have $1 \leq l_{i} \leq m$ ones on the $i$-th row (for all the possible $l_{i}$) and have at least one $1$ on each column, by $R(k,m,t)$. Then the number of $B$ matrices will be
\begin{align}
C_{n}^{t} \cdot R(k,m,t) \label{numknmatricesextended}
\end{align}

where $C_{n}^{t}$ stands for the number of possibilities to pick $t$ out of $n$ nodes (columns) and $R(k,m,t)$ stands for the number of possibilities to cover all the $t$ nodes by $k$ agents.

For calculating the number of $B$ matrices first we prove the following lemma which shows the similarities between schemes ABIDE and EABIDE.

\begin{lemma}\label{lemsymiliarities}
The probability of covering exactly $t$ out of $n$ nodes with one agent having memory of $m$ units in EABIDE scheme is equal to the probability of covering exactly $t$ out of $n$ nodes with $m$ agents having memory of $1$ unit in ABIDE scheme.
\begin{center}
$P_{1}^{*}(n,m,t) = P_{m}(n,1,t)$
\end{center}
\end{lemma}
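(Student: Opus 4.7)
The plan is to show that the two probabilities describe the same underlying random experiment, so their equality is really a coupling argument rather than a computation. On the right-hand side, ABIDE with $m$ agents of memory $1$ has each agent make a single uniform draw from $N$ independently; the joint outcome is a tuple $(v^{(1)},\ldots,v^{(m)}) \in N^m$ distributed uniformly, and $\bigcup_{i=1}^{m} S_i = \{v^{(1)},\ldots,v^{(m)}\}$. On the left-hand side, EABIDE with a single agent of memory $m$ has that agent store $m$ pieces of data from randomly chosen nodes, with repetitions possible; since nodes are indistinguishable from the agent's point of view (as discussed just before the lemma), each stored piece corresponds to an independent uniform draw $(u^{(1)},\ldots,u^{(m)}) \in N^m$, and $S_1^* = \{u^{(1)},\ldots,u^{(m)}\}$.

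First I would make the two sample spaces explicit and observe that the identity map on $N^m$ is a measure-preserving bijection between them, since both carry the uniform distribution on $N^m$. Second I would note that the events $\{\,|\bigcup_{i=1}^{m} S_i| = t\,\}$ and $\{\,|S_1^*| = t\,\}$ are defined by the identical predicate ``the tuple has exactly $t$ distinct coordinates'' on $N^m$. The equality $P_1^*(n,m,t) = P_m(n,1,t)$ then follows at once.

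As a sanity check I would also verify the identity by formula. Substituting $k=m$ and the memory parameter $1$ into Theorem \ref{theoremPknmt} yields
\[
P_m(n,1,t) = \frac{C_{n}^{t} \sum_{i=0}^{t-1}(-1)^{i}\, C_{t}^{i}\,(t-i)^{m}}{n^{m}},
\]
and a direct inclusion--exclusion count of the length-$m$ sequences over $N$ whose value set has size exactly $t$ produces the same expression for $P_1^*(n,m,t)$. The main, and really only, obstacle is justifying the modelling step that the $m$ successive visits of a single EABIDE agent are i.i.d.\ uniform on $N$; once the paper's indistinguishability convention is granted, the lemma reduces to the elementary observation that $m$ i.i.d.\ uniform draws from $N$ form the same random object whether they are produced by one agent in sequence or by $m$ distinct agents acting once each.
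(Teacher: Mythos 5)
Your proposal is correct and takes essentially the same approach as the paper: the paper's proof likewise identifies each of the $m$ memory cells of the single EABIDE agent with an individual ABIDE agent of memory $1$, using the fact that every node (visited or not) is equally likely at each step. Your version merely makes the coupling on $N^m$ explicit and adds a formula-based sanity check, both of which are consistent with, and slightly more rigorous than, the paper's one-line argument.
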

\begin{proof}
The proof is simple. Having in mind that at any point of time each node has the same probability to be visited by an agent in EABIDE scheme (even those nodes that have already been visited), each cell of the agent's memory can be considered as an individual agent having a memory of size $1$ which leads to $m$ agents with one unit of memory in ABIDE scheme.
\end{proof}

\begin{corollary}\label{colsimilarity}
$P_{k}^{*}(n,m,t) = P_{km}(n,1,t)$
\end{corollary}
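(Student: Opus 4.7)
The plan is to bootstrap Lemma \ref{lemsymiliarities} from one EABIDE agent to $k$ of them, using independence across agents. First I would observe that by definition the $k$ EABIDE agents act independently: each fills its $m$ memory cells by sampling nodes of $N$ independently and uniformly, with no coordination between agents. Therefore the joint distribution of the sequence of visits factors as a product of $k$ identical single-agent distributions.

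Next I would apply Lemma \ref{lemsymiliarities} to each of the $k$ agents separately. The lemma exhibits, for each EABIDE agent, a coupling (or equivalently an equality in distribution) between the set of distinct nodes it covers and the set of nodes covered by $m$ independent ABIDE agents each with memory $1$. Carrying out this replacement simultaneously for every one of the $k$ EABIDE agents produces a collection of $km$ mutually independent ABIDE agents, each with memory $1$, whose union of visited nodes has the same distribution as $\bigcup_{i=1}^{k} S_i^{*}$.

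Finally, since the event $\{|\bigcup_i S_i^{*}| = t\}$ is a function of the set of visited nodes only, equality in distribution of those sets yields equality of the probabilities, giving $P_k^{*}(n,m,t)=P_{km}(n,1,t)$.

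The only point requiring care is that combining the per-agent equivalences of Lemma \ref{lemsymiliarities} into a single joint equivalence really does preserve the distribution of the union; this is immediate once one notes that both sides of the claimed equality are product measures over independent components, and that a product of equally distributed independent factors is equally distributed. Beyond this observation there is no essential obstacle, so the proof is a short invocation of the lemma plus independence.
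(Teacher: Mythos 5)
Your proof is correct and follows essentially the same route as the paper, whose own justification is simply that the argument of Lemma \ref{lemsymiliarities} (each memory cell of each agent behaves as an independent uniform sample, i.e.\ as a separate memory-$1$ ABIDE agent) applies verbatim to all $k$ agents at once, yielding $km$ such agents. Your added remark that one needs equality in distribution of the covered \emph{sets} (not merely of the lemma's probabilities) to compose across agents is a worthwhile precision, but it is exactly what the lemma's cell-by-cell identification already provides.
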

\begin{proof}
The proof is similar to the proof of Lemma $\ref{lemsymiliarities}$.
\end{proof}

\begin{theorem}\label{theoremrq}
\begin{align}
R(k,m,t)=Q(km,1,t)=\sum_{i=0}^{t-1}(-1)^{i}C_{t}^{i} \cdot (t-i)^{mk} \label{rkmtformula}
\end{align}
\end{theorem}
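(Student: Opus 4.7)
The plan is to prove the two equalities in turn. The second equality, $Q(km,1,t) = \sum_{i=0}^{t-1}(-1)^{i}\binom{t}{i}(t-i)^{mk}$, is immediate from the inclusion-exclusion identity $(\ref{inclexcl})$: substituting $k \mapsto km$ and $m \mapsto 1$ gives $Q(km,1,t) = \sum_{i=0}^{t-1}(-1)^{i}C_t^{i}\,\binom{t-i}{1}^{km}$, and $\binom{t-i}{1} = t-i$ yields the claim. Notice that the upper summation index becomes $t-m = t-1$.

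For the first equality, $R(k,m,t) = Q(km,1,t)$, I would proceed via Corollary $\ref{colsimilarity}$, which asserts $P_k^{*}(n,m,t) = P_{km}(n,1,t)$. On the right, Theorem $\ref{theoremPknmt}$ specializes to $P_{km}(n,1,t) = C_n^{t}\, Q(km,1,t)/\bigl(C_n^{1}\bigr)^{km} = C_n^{t}\, Q(km,1,t)/n^{km}$. On the left, the sample space for $k$ EABIDE agents with memory $m$ has size $n^{km}$ (each of the $km$ memory slots is filled independently by a uniformly chosen node of $N$), and by $(\ref{numknmatricesextended})$ the number of favorable outcomes is $C_n^{t}\cdot R(k,m,t)$, so $P_k^{*}(n,m,t) = C_n^{t}\, R(k,m,t)/n^{km}$. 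Equating the two expressions and cancelling the common prefactor $C_n^{t}/n^{km}$ gives $R(k,m,t) = Q(km,1,t)$.

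The main subtlety lies in the reading of $R(k,m,t)$: as stated in the text it is a matrix count, but for the identification with $Q(km,1,t)$ to hold at the level of raw counts, it must be read as enumerating sampling sequences (not matrices with variable row weights) that cover every chosen node. A direct combinatorial bijection provides an alternate route bypassing the probability step: concatenate the $k$ agents' memory sequences into one word of length $km$ over the $t$-letter alphabet of covered nodes; the requirement that every covered node be hit is precisely the surjectivity condition, which $Q(km,1,t)$ counts as $km \times t$ matrices with one $1$ per row and no all-zero column. Either way, the essential mechanism is the ``memory-$m$ into single-slot-$km$'' collapse already validated probabilistically by Lemma $\ref{lemsymiliarities}$ and Corollary $\ref{colsimilarity}$.
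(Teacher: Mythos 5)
Your main argument is essentially the paper's own proof, which simply cites Theorem $\ref{theoremPknmt}$ and Corollary $\ref{colsimilarity}$: you equate $P_{k}^{*}(n,m,t)=C_{n}^{t}R(k,m,t)/n^{km}$ with $P_{km}(n,1,t)=C_{n}^{t}Q(km,1,t)/n^{km}$ and cancel, exactly as the paper intends, while filling in the specialization of $(\ref{inclexcl})$ that the paper leaves implicit. Your added remark that $R(k,m,t)$ must be read as counting covering sampling sequences rather than raw $0$--$1$ matrices with row weights in $[1,m]$ (the two counts genuinely differ, e.g.\ for $k=1$, $m=2$, $t=2$) is a correct and worthwhile clarification, and the concatenation bijection onto surjections counted by $Q(km,1,t)$ is a valid alternative justification, but it does not change the route of the proof.
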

\begin{proof}
The proof follows from Theorem $\ref{theoremPknmt}$ and Corollary $\ref{colsimilarity}$.
\end{proof}

\begin{corollary}
\begin{align}
P_{k}^{*}(n,m,t)& =\frac{C_{n}^{t}\cdot \sum_{i=0}^{t-1}(-1)^{i}C_{t}^{i} \cdot (t-i)^{mk}}{n^{mk}} = \nonumber \\
      & \frac{C_{n}^{t}\cdot R(k,m,t)}{n^{mk}}
\end{align}
\end{corollary}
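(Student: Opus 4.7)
The plan is to chain together three results that have already been established in the excerpt: Corollary~\ref{colsimilarity}, Theorem~\ref{theoremPknmt}, and Theorem~\ref{theoremrq}. No new combinatorial argument is needed; the corollary is essentially a transcription once the right substitutions are made.

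First I would invoke Corollary~\ref{colsimilarity} to replace $P_k^*(n,m,t)$ by $P_{km}(n,1,t)$. This reduces the EABIDE quantity to an ABIDE quantity in which each agent carries a memory of a single unit, so that Theorem~\ref{theoremPknmt} applies directly. Substituting $m \mapsto 1$ and $k \mapsto km$ into the formula of Theorem~\ref{theoremPknmt} yields
\begin{align*}
P_{km}(n,1,t) = \frac{C_n^t \cdot Q(km,1,t)}{(C_n^1)^{km}} = \frac{C_n^t \cdot Q(km,1,t)}{n^{km}},
\end{align*}
where I used $C_n^1 = n$ to simplify the denominator.

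Next I would apply Theorem~\ref{theoremrq}, which identifies $R(k,m,t)$ with $Q(km,1,t)$ and provides the closed form $\sum_{i=0}^{t-1}(-1)^i C_t^i (t-i)^{mk}$. Substituting this expression into the numerator above gives both displayed forms of the corollary simultaneously, completing the proof.

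Since every ingredient is already in place, there is no substantial obstacle; the only thing to be careful about is the bookkeeping of the substitution $(k,m) \mapsto (km,1)$ and, in particular, that the denominator collapses to $n^{km}$ rather than $(C_n^m)^k$, which is the reason the extended scheme has a cleaner expression than the original one.
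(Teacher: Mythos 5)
Your proof is correct and follows essentially the same route as the paper: the paper's own proof simply cites Corollary~\ref{colsimilarity} and Theorem~\ref{theoremrq} (with Theorem~\ref{theoremPknmt} implicitly supplying the formula for $P_{km}(n,1,t)$), which is exactly the chain of substitutions you spell out. Your version is just a more explicit write-up, including the useful observation that $(C_n^1)^{km}=n^{km}$ collapses the denominator.
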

\begin{proof}
The proof follows from Theorem $\ref{theoremrq}$ and Corollary $\ref{colsimilarity}$.
\end{proof}

Finally, we note that $R(k,m,t)$ has equivalent presentation in terms of Stirling numbers of the second kind \cite{chelluri}
\begin{align}
S(N,K) = \frac{1}{K!} \sum_{j=0}^{K}(-1)^{j}C_{K}^{j} (K-j)^N.
\end{align}

Formally in the formula of $R(k,m,t)$ we may add the zero term for $i=t$, and then we receive
\begin{align}
	R(k,m,t)=t!S(mk,t)
\end{align}

Stirling number of the second kind $S(N,K)$ is the number of ways to partition a set of $N$ objects into $K$ non-empty subsets. Existing asymptotic estimates for them \cite{chelluri,louchard2012,temme1993} allow to get simple approximations for $R(k,m,t)$ and therefore for $P_{k}^{*}(n,m,t)$.

The following theorem, which is the final postulation of this paper, can be formulated.

\begin{theorem}
\begin{align}
P_{k}^{*}(n,m,t) = \frac{C_{n}^{t}\cdot t!S(mk,t)}{n^{km}}
\end{align}
\end{theorem}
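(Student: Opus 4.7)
The plan is to derive the stated identity directly from the preceding corollary and Theorem \ref{theoremrq}, with only a formal rewriting step to recognize a Stirling number. First, I would invoke the corollary just above, which expresses
\[
P_{k}^{*}(n,m,t) = \frac{C_{n}^{t}\cdot R(k,m,t)}{n^{mk}},
\]
so that the entire task reduces to replacing $R(k,m,t)$ by $t!\,S(mk,t)$. In effect, the $P_k^*$ statement is nothing more than the corollary rephrased in terms of Stirling numbers.

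Next, I would take the explicit inclusion-exclusion expression for $R(k,m,t)$ given by Theorem \ref{theoremrq},
\[
R(k,m,t)=\sum_{i=0}^{t-1}(-1)^{i}C_{t}^{i}\,(t-i)^{mk},
\]
and compare it with the definition of the Stirling number of the second kind,
\[
S(N,K) = \frac{1}{K!} \sum_{j=0}^{K}(-1)^{j}C_{K}^{j}\,(K-j)^{N}.
\]
Setting $N=mk$ and $K=t$, the only apparent discrepancy is the range of summation: the Stirling formula sums up to $j=t$, while $R(k,m,t)$ stops at $i=t-1$. The key observation is that the omitted term is $(-1)^{t}C_{t}^{t}\,0^{mk}=0$ (valid whenever $mk\ge 1$, which we may safely assume since otherwise the problem is degenerate). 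Adding this zero term lets me identify the sum exactly as $t!\,S(mk,t)$, yielding $R(k,m,t)=t!\,S(mk,t)$.

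Finally, I would substitute this identity into the corollary to obtain
\[
P_{k}^{*}(n,m,t) = \frac{C_{n}^{t}\cdot t!\,S(mk,t)}{n^{km}},
\]
completing the proof. The only subtlety worth flagging is the boundary case $mk=0$; apart from that, the argument is purely definitional, so there is no genuine obstacle. In spirit, the content of the theorem is really the combinatorial bijection already captured by Corollary \ref{colsimilarity}: once each memory cell in EABIDE is viewed as an independent unit-memory agent in ABIDE, the number of surjections from $mk$ visits onto a $t$-element target set is counted by $t!\,S(mk,t)$, and dividing by the $n^{mk}$ equally likely visit patterns gives the probability.
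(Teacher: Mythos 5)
Your proposal is correct and follows essentially the same route as the paper: it applies the corollary $P_{k}^{*}(n,m,t)=C_{n}^{t}\cdot R(k,m,t)/n^{mk}$ and then rewrites $R(k,m,t)$ as $t!\,S(mk,t)$ by adjoining the vanishing $i=t$ term to the inclusion-exclusion sum, exactly as the paper does in the paragraph preceding the theorem. Your explicit remark that the added term is $(-1)^{t}C_{t}^{t}\,0^{mk}=0$ only for $mk\ge 1$ is a small but welcome precision the paper glosses over.
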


\section{Conclusion}
In its current state the intrusion detection system called ABIDE \cite{irarandomagents} considers software simulations to understand the number of data mining agents required for identifying intrusions in a system with a given probability. In the current paper we gave formulas that allow to compute this number analytically. Further we considered the extended version of ABIDE (EABIDE) and proved formulas for the same quantitative characteristics. Formulas for EABIDE  are achieved in terms of Stirling numbers of the second kind \cite{chelluri,louchard2012,temme1993}, which allows to obtain asymptotic estimates and further simplifications for quantitative characteristics of EABIDE. In the future it will be interesting to consider the same quantitative characteristics analytically for more general cases of ABIDE and EABIDE schemes with more than one DFA.
\bibliographystyle{plain}
\bibliography{biblo}
\end{document}